\documentclass[11pt]{article}

\usepackage[margin=1in]{geometry}
\usepackage{amsmath,amssymb,amsthm,mathtools}
\usepackage{booktabs}
\usepackage{enumitem}
\usepackage{microtype}
\usepackage[numbers,sort&compress]{natbib}
\usepackage{xcolor}
\usepackage[colorlinks=true,linkcolor=blue!55!black,citecolor=blue!55!black,urlcolor=blue!55!black]{hyperref}
\usepackage[nameinlink,noabbrev]{cleveref}

\newcommand{\bits}{\{0,1\}}
\newcommand{\Unif}{\operatorname{Unif}}
\newcommand{\supp}{\operatorname{supp}}
\newcommand{\KL}{D_{\mathrm{KL}}}

\newcommand{\cF}{\mathcal{F}}

\newcommand{\E}{\mathbb{E}}
\newcommand{\Prb}{\mathbb{P}}

\newtheorem{theorem}{Theorem}[section]
\newtheorem{lemma}[theorem]{Lemma}

\theoremstyle{definition}
\newtheorem{definition}[theorem]{Definition}
\theoremstyle{remark}

\title{Beyond the Static Barrier for Ordinary Dynamic Approximate Membership}
\author{
  Qizhi Chen\\[2pt]
  \small Independent Researcher\\[-1pt]
  \small\href{mailto:15312329@qq.com}{\texttt{15312329@qq.com}}
  \and
  Zhebei Shen\\[2pt]
  \small Zhejiang University\\[-1pt]
  \small\href{mailto:shenzhebei@zju.edu.cn}{\texttt{shenzhebei@zju.edu.cn}}
  \and
  Zhehan Yu\\[2pt]
  \small Independent Researcher\\[-1pt]
  \small\href{mailto:zhehanyu0630@gmail.com}{\texttt{zhehanyu0630@gmail.com}}
}
\date{}

\begin{document}
\maketitle

\begin{abstract}
We prove a strict space separation between static and ordinary dynamic
approximate membership at every fixed error rate.  For each fixed
$\varepsilon\in(0,1)$, a capacity-$n$ ordinary dynamic filter over a universe
of size $u$, with zero false negatives, pointwise false-positive probability at
most $\varepsilon$, arbitrary history dependence, a free public random tape,
and at most $H$ bits of persistent state, satisfies
\[
  H\ge
  \bigl(\log_2(1/\varepsilon)+a_\varepsilon^{\rm c}\bigr)n-o(n),
\]
under only $u/n\to\infty$.  The constant $a_\varepsilon^{\rm c}$ is an
explicit variational threshold obtained by preserving the dependence between
the parent accepted mass and the successor reservoir.

The structural step is a common-continuation transport lemma.  A joint
posterior KL bound gives a branch-specific survivor support; the same legal
delete--insert word transports that support to one successor state, forcing an
accepted reservoir.  We then keep the parent outside mass $1-X$ in the
conditional-entropy argument instead of replacing it by $1-\varepsilon$.
This yields a two-variable analytic envelope, with no selected thresholds,
dyadic witnesses, or numerical assumptions.

\end{abstract}

\section{Introduction}

Approximate membership asks for a compact representation of a set
$S\subseteq U$ that never rejects a member and accepts a nonmember with
probability at most $\varepsilon$.  The static counting bound of Carter et al.
\cite{CarterEtAl1978} is
\[
  n\log_2(1/\varepsilon)-o(n)
\]
when $u/n\to\infty$.  Dynamic filters must preserve this guarantee under
legal insertions and deletions while retaining only their persistent state.
The difficulty is that one state may represent many current sets, whereas the
same update word must be legal for every set in the corresponding fiber.

The small-error regime is now understood sharply: for $\varepsilon=o(1)$,
Kuszmaul, Liang, and Zhou \cite{KuszmaulLiangZhou2025} prove the first-order
bound $n\log_2(1/\varepsilon)+n\log_2 e-o(n)$, building on earlier work of
Lovett and Porat \cite{LovettPorat2013} and Kuszmaul and Walzer
\cite{KuszmaulWalzer2024}.  The constant-error regime is different.  Their
model allows arbitrary history dependence, a free public random tape, and
pointwise error after every fixed legal history; we work in precisely this
model and retain the natural-universe assumption $u/n\to\infty$.
For the ordinary legal-update semantics at fixed error, the Carter bound had
remained the general baseline; previous strict separations either used related
incremental semantics or imposed additional structure.

\subsection{Our result}

For $\varepsilon\in(0,1)$, let $H^*_\varepsilon(n,u)$ denote the infimum
worst-case number of persistent bits used by such a filter of capacity $n$.
Write $\ell=\log_2(1/\varepsilon)$ for the Carter baseline.  We measure space
above this baseline by writing a candidate bound as
$H\le(\ell+a)n+o(n)$; thus $a\ge0$ is an excess-space budget.  The coupled
argument rules out every budget below a threshold $a_\varepsilon^{\rm c}$.
Formally, this threshold is the largest $a$ for which the successor-reservoir
cost still exceeds the available budget; the precise envelope is defined in
\Cref{sec:variational}.  Intuitively, it balances the information needed to
preserve possible parent keys through a replacement against the entropy needed
to represent fresh keys accepted by the successor.

\begin{theorem}[Main theorem: coupled fixed-error separation]\label{thm:general-epsilon}
For every fixed $\varepsilon\in(0,1)$,
\[
 H^*_\varepsilon(n,u)
 \ge
 \bigl(\log_2(1/\varepsilon)+a_\varepsilon^{\rm c}\bigr)n-o(n)
\]
along every sequence with $n\to\infty$ and $u/n\to\infty$.
\end{theorem}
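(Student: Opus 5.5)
The plan is a two-round information argument built on the common-continuation transport lemma described in the abstract. We fix a filter with $H$ bits of state and the public random tape $R$. We draw a uniformly random $n$-set $S\subseteq U$, insert it, and let $\sigma$ be the resulting state. We then apply a legal delete--insert word that removes a random subset $D\subseteq S$ of size $\delta n$ and inserts a fresh random set $T$ of size $\delta n$; let $\sigma'$ be the successor state. Conditioned on $R$, the usual Carter counting applied to $\sigma$ gives $H(S\mid\sigma,R)\lesssim \log_2\binom{|A_\sigma|}{n}$, where $A_\sigma$ is the accepted set of $\sigma$. Here I would write $X=|A_\sigma|/u$ for the parent accepted mass. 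Pointwise false-positive control gives $\E X\le \varepsilon+n/u$, but, following the abstract, I will keep $X$ as a random variable rather than bound it by $\varepsilon$.

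\textbf{Transport step.} From $H\le(\ell+a)n$ and the chain rule, $I(S;\sigma\mid R)\ge \log_2\binom{u}{n}-\E\log_2\binom{Xu}{n}$. This will be turned into a joint posterior KL bound: for a typical state $\sigma$, the posterior law of $S$ given $\sigma$ has KL divergence at most about $an$ from uniform on $n$-subsets of $A_\sigma$. A Pinsker/Markov-type argument then yields a branch-specific survivor set $V\subseteq A_\sigma$. This is a set of parent-accepted keys, of density close to the posterior marginal, that each lie in $S$ with non-negligible posterior probability.

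Since the same delete--insert word is legal for every set in the fiber (we choose the word to delete only keys that are certain members, e.g.\ from a prefix we fix), zero false negatives force every surviving key of every fiber member to be accepted by the single successor state $\sigma'$. Hence $A_{\sigma'}$ contains a reservoir of size at least $|V\setminus D|$, on top of the fresh keys $T$. I would state this as the common-continuation transport lemma (the paper's structural step) and apply it here.

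\textbf{Entropy accounting in the successor.} The successor must encode $T$, which is fresh and independent of $\sigma$. Given $\sigma'$, the set $T$ lies in $A_{\sigma'}\setminus(\text{reservoir})$ restricted to the region outside $A_\sigma$, whose mass is $1-X$. So $H\ge H(\sigma')\ge I(S\setminus D,T;\sigma')$. Bounding $H(T\mid\sigma')$ by counting within the forced accepted mass $|A_{\sigma'}|\ge \text{reservoir}+\text{new}$, together with the constraint $|A_{\sigma'}|\le(\varepsilon+o(1))u$ in expectation, gives a lower bound on $H$ as a function $\Phi(X,\rho)$. Here $\rho$ is the reservoir density, and $\rho$ itself is tied to $X$ and $a$ through the KL bound. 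Taking expectations and using concavity yields the inequality $a\ge \inf\,\Phi$ over the admissible joint laws of $(X,\rho)$. The variational threshold $a_\varepsilon^{\rm c}$ of \Cref{sec:variational} is exactly the largest $a$ for which this inequality fails. Consequently any budget $a<a_\varepsilon^{\rm c}$ is contradictory, and letting $u/n\to\infty$ kills the $n/u$ corrections, which gives $H\ge(\ell+a^{\rm c}_\varepsilon)n-o(n)$.

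\textbf{Main obstacle.} I expect the hardest step to be converting the averaged KL bound into a survivor support that is uniform enough to be transported through one successor state. The problem is that the bound is averaged over $R$ and $\sigma$, while the transport needs it per branch with the dependence on $X$ preserved. The first thing I would try is conditioning on the pair $(R,\sigma)$ and applying the KL chain rule coordinatewise over the deleted block. I would then discard the $o(n)$ atypical branches by Markov's inequality, keeping the two-variable envelope in $(X,\rho)$ intact so that nothing gets replaced by $1-\varepsilon$ prematurely.
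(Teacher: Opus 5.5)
\textbf{Gap in the transport step.} Your claim that ``the same delete--insert word is legal for every set in the fiber'' is false in the ordinary model. The word $\mathsf{Delete}(D);\mathsf{Insert}(T)$ is legal from a fiber member $S''$ only if $D\subseteq S''$ and $T\cap S''=\varnothing$, and nothing makes either condition hold across the fiber of $\sigma$. This is exactly the continuation-legality issue the paper identifies.

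Deleting from a prefix does not repair it. If the prefix is random (for example, the first keys of $S$ in the insertion order), its keys are not common to the fiber. If it is fixed in advance, the deleted keys carry no information, so the parent only encodes $n-q$ random keys. The budget $(\ell+a)n$ then leaves slack of order $q\ell$ for the posterior deficit, which destroys the threshold. In neither case is freshness of $T$ addressed.

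Consequently, a per-state survivor set built from the posterior of $S$ given $\sigma$ does not transport. Only fiber members compatible with the realized branch reach $\sigma'$, and the union of their survivors can be much smaller.

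The missing idea is to condition on the branch itself:
\begin{itemize}
\item Push both the fiber law $\mu_z$ and the reference law $\Unif\binom{A_z}{n}$ through the same replacement kernel.
\item Data processing plus the chain rule give $\E_{(D,I)}\KL\bigl(P_z(K\mid D,I)\,\|\,Q_z(K\mid D,I)\bigr)\le d_z$.
\item The reference conditional is uniform on $\binom{A_z\setminus(d\cup i)}{k}$.
\item Take $V_{z,d,i}$ to be the support union of the actual conditional law of $K$. These are precisely the survivors of sources for which the realized word is legal.
\end{itemize}
The bound $\KL\ge\log\bigl(\binom bk/\binom vk\bigr)$ then gives $v/u\ge(X-2\lambda n/u)\,2^{-\kappa/(1-\lambda)}$ branchwise, with $\E\kappa\le\E d_Z/n$.

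\textbf{The quantitative tools would not reach $a^{\rm c}_\varepsilon$ even with transport.} Pinsker is vacuous when the divergence is of order $an$. Markov pruning discards a constant fraction of branches, which changes the constant rather than costing $o(n)$.

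The paper prunes nothing:
\begin{itemize}
\item Convexity of $\phi_{\varepsilon,\lambda}$ and Jensen give $\E[X2^{-\kappa/(1-\lambda)}]\ge\phi_{\varepsilon,\lambda}\bigl(\E[\mathcal E_\varepsilon(X)+\kappa]\bigr)$.
\item With $Y=X/\varepsilon$, there is an exact identity $\ln\bigl(X2^{-\kappa/(1-\lambda)}/\varepsilon\bigr)=\bigl(Y-1-\lambda\ln Y-(\ln2)(\mathcal E_\varepsilon(X)+\kappa)\bigr)/(1-\lambda)$. Combined with $\E\ln Y\le\ln m$ for $m=\E X/\varepsilon$, it yields the second profile $\psi_\lambda(m,a)$.
\item The reservoir-entropy lemma conditions on $(R,M,S,D)$. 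Then the fresh keys outside $A_Z$ are uniform in $U\setminus A_Z$ and are contained in $A'_{Z'}\setminus A_Z$. Joint convexity of $s\log_2(s/d)$ averages the resulting bound without any independence assumption.
\end{itemize}

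Your function $\Phi(X,\rho)$ is never shown to reduce to $\max\{\phi_{\varepsilon,\lambda},\psi_\lambda\}$ inside $\widetilde{\mathcal F}_\varepsilon$. So the claim that its threshold is ``exactly'' $a^{\rm c}_\varepsilon$ is asserted rather than derived.
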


In other words, the dynamic requirement costs a positive linear excess over
the static Carter term, and $a_\varepsilon^{\rm c}$ is the amount of excess
that this one-step argument certifies.

The theorem allows arbitrary history dependence, a free public random tape,
and pointwise error after every legal history.  It assumes neither stationarity
nor canonical states, monotonicity, or a fingerprinting implementation.  The
threshold is defined by a strict variational inequality, so no numerical
witness or attained optimizer is part of the statement.

\subsection{Position relative to prior work}

Lovett and Porat \cite{LovettPorat2013} proved a strict separation in a related
incremental model.  Their hard source allows insertion words with repeated
labels, whereas the ordinary API here requires every insertion to be fresh;
continuation legality is therefore a genuine issue.  The present theorem keeps
the ordinary legal-history semantics and loses only $o(n)$ bits from the
natural-universe comparison.  Other dynamic-filter constructions and entropy
rate results use different resource conventions, such as eventual-size
resizability or high-probability current-state space
\cite{PaghSegevWieder2013,BenderEtAl2018,BerceaEven2020,BlellochEtAl2026}.

More broadly, resource-constrained dynamic systems also include compact
entropy-encoded representations and exact optimization of capacity-constrained
network paths \cite{BlellochEtAl2026,LuoShen2026}.  These works provide
constructions or exact algorithms in their respective settings; our focus is
the complementary information-theoretic cost of maintaining compact state
under updates.

\subsection{Technical overview}

The proof has one structural idea: all losses are charged on average in their
natural information units.  In Carter's argument, an accepted-set density
$X$ costs the exact tangent energy
\[
 \mathcal E_\varepsilon(X)
 =\log_2\frac\varepsilon X
  +\frac{X-\varepsilon}{\varepsilon\ln2}.
\]
A joint relative-entropy chain rule charges the conditional posterior cost
$\kappa$ of a simultaneous replacement of $\lambda n$ keys to the same
budget.  The common update word forces a successor state to retain at least
$X2^{-\kappa/(1-\lambda)}u-o(u)$ accepted parent keys.  Minimizing this
quantity at fixed total energy produces the survivor profile
$\phi_{\varepsilon,\lambda}$.  The coupled argument also keeps
$m=\mathbb E[X]/\varepsilon$: an exact logarithmic identity and Jensen give the
additional lower bound $\psi_\lambda(m,a)$ in \eqref{eq:coupled-survivor}.

In the opposite direction, let $C$ be the successor's accepted reservoir
outside the parent accepted set.  Conditional entropy of the fresh keys gives
directly
\[
 \frac Hn\ge
 \lambda(1-\varepsilon m)
 \log_2\frac{1-\varepsilon m}{\E|C|/u}-o(1).
\]
The parent outside mass is therefore coupled to the survivor lower bound,
rather than replaced by $\beta$.  Combining the coupled inequalities and
optimizing over $\lambda$ and the feasible mean $m$ yields
\eqref{eq:coupled-envelope}.  The only universe-dependent errors are
$O(n/u)$ and $O(n^{-1/2})$, so $u/n\to\infty$ suffices.

\subsection{Organization}

\Cref{sec:model} defines the model and the replacement experiments.
\Cref{sec:carter} extracts the exact Carter energy.
\Cref{sec:replacement} proves the joint posterior replacement lemma, and
\Cref{sec:forcing} turns posterior thickness into a bound on the average
successor reservoir.  \Cref{sec:integrated-cover} proves the reverse
conditional-entropy bound.  \Cref{sec:variational,sec:general-epsilon}
derive the survivor profiles, establish the coupled variational threshold, and
prove the main theorem.  The appendix records the public-tape and rounding
details needed for the asymptotic quantifiers.

\section{Model and experiments}\label{sec:model}

All logarithms and information quantities are in bits.  Let $U$ be a finite
universe of size $u$, and let $n$ be the capacity.  We fix once and for all a
total order on $U$.

\begin{definition}[Ordinary dynamic filter]\label{def:filter}
An ordinary dynamic approximate-membership filter has a persistent state in
$\bits^H$ and supports the operations
\(\mathsf{Insert}(x)\), \(\mathsf{Delete}(x)\), and
\(\mathsf{Query}(x)\).  A history is legal if an insertion is applied only to
a key outside the current set, a deletion only to a key inside the current
set, and the set size always lies in $[0,n]$.

All random bits used by initialization, updates, and queries are sampled in
advance as a read-only public tape $R$, independent of the history.  Once
$R=r$ is fixed, every operation is deterministic.  Formally, an update at
time $t$ is a transition of the form
\[
  \Delta^{\mathsf{op}}_{r,t}:\bits^H\times U\longrightarrow\bits^H,
  \qquad \mathsf{op}\in\{\mathsf{Insert},\mathsf{Delete}\},
\]
and a query is a function of $(r,t,m,x)$.  Thus history dependence is
unrestricted but can act only through the $H$-bit persistent state (together
with $r$, $t$, and the current operation key); the algorithm has no uncharged
access to the preceding transcript.  The tape is free; only the $H$
persistent bits are charged.  At operation time $t$, write
$A_t(r,m)\subseteq U$ for the keys accepted by the query algorithm on tape
$r$ and state $m$.  Allowing the query map to depend on $t$ only strengthens
the model.  Fresh query coins, if desired, are addressed on $R$ by the query
invocation and key.  Fixing $r$ therefore couples all counterfactual queries
into the single accepted set $A_t(r,m)$; no independence across keys is used.

For every tape, every legal history $h$ with current set $S(h)$, and every
$x\in S(h)$, the answer is \textsc{yes}.  For every fixed legal history $h$
and every fixed $x\notin S(h)$,
\[
  \Prb_R[x\in A_{|h|}(R,M_R(h))]\le\varepsilon.
  \tag{2.1}\label{eq:fpr}
\]
No restriction is placed on running time, relocation, or auxiliary
representations within the charged state.  All algorithm maps are assumed
measurable; equivalently, one may take $R$ to be the standard product space
of independent public coin flips.
\end{definition}

The explicit time subscript in $A_t$ prevents any implicit use of a free
clock.  It may be dropped when the relevant time is clear.

\subsection{The parent experiment}

Choose
\[
  S\sim\Unif\binom Un
\]
independently of $R$, start from the empty set, and insert the elements of $S$
in the order induced by the fixed order on $U$.  Let $M$ be the state at time
$t_0=n$, set $Z=(R,M)$, and abbreviate
\[
  A_Z=A_{t_0}(R,M),\qquad a_Z=|A_Z|.
\]
Zero false negatives imply $S\subseteq A_Z$ on every tape.

For each fixed $r$, the canonical build defines a map
$S\mapsto M_r(S)$.  Hence for every nonempty fiber
\[
  \cF_{r,m}=\left\{s\in\binom Un:M_r(s)=m\right\},
\]
the conditional source law is the explicit finite distribution
\[
  \mu_{r,m}=\Unif(\cF_{r,m}).
\]
This definition does not require the singleton event $R=r$ to have positive
probability.  Statements involving $Z=z=(r,m)$ below mean these fiberwise
statements, integrated over $r$.

\subsection{The replacement experiment}

Fix a constant replacement fraction $0<\lambda<1$ and put
\[
 q=\lfloor\lambda n\rfloor,\qquad k=n-q.
\]
After the parent build, independently choose
\[
  D\mid S\sim\Unif\binom Sq,
  \qquad
  I\mid S\sim\Unif\binom{U\setminus S}q.
  \tag{2.2}\label{eq:replacement-law}
\]
Delete the elements of $D$ and then insert the elements of $I$, using the
fixed universe order within each batch.  The word is legal and the successor
set is
\[
  S'=(S\setminus D)\cup I.
\]
Let $M'$ be the successor state at time $t_1=n+2q$, and write
\[
  Z'=(R,M'),\qquad A'_{Z'}=A_{t_1}(R,M'),\qquad a'_{Z'}=|A'_{Z'}|.
\]

\begin{lemma}[Tape-independent uniform endpoints]\label{lem:experiments}
The complete random history in both experiments is independent of $R$.
Moreover, both $S$ and $S'$ are uniform in $\binom Un$.
\end{lemma}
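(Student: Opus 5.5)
The plan is to observe that every random object defining the histories is a deterministic function of $(S,D,I)$, and that this triple is sampled independently of $R$. The history in the parent experiment is the insertion word listing $S$ in the fixed order on $U$, so it is a function of $S$ alone. The history in the replacement experiment is the concatenation of three words: the build word for $S$, the deletion word listing $D$ in the fixed order, and the insertion word listing $I$ in the fixed order. This concatenation is a function of $(S,D,I)$. By construction $S$ is drawn independently of $R$, and by \eqref{eq:replacement-law} the pair $(D,I)$ is drawn conditionally on $S$ only, again independently of $R$. Hence $(S,D,I)$ is independent of $R$, and so is any function of it, in particular the complete history. I would phrase this measurably: for each fixed history $h$, the event $\{\text{history}=h\}$ lies in $\sigma(S,D,I)$, which is independent of $\sigma(R)$. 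Legality of the word was already noted in the text, so it needs no further check.

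Uniformity of $S$ holds by definition. For $S'$, I would fix a target $T\in\binom Un$ and count the triples $(s,d,i)$ with $(s\setminus d)\cup i=T$. Since $i\cap s=\emptyset$, the set $s\setminus d$ is a $k$-subset $K\subseteq T$ and $i=T\setminus K$. The set $d$ is then any $q$-subset of $U\setminus T$ together with nothing else, because $d\subseteq s$, $d\cap K=\emptyset$ and $d\cap i=\emptyset$; so $s=K\cup d$. Each triple carries probability
\[
\binom un^{-1}\binom nq^{-1}\binom{u-n}q^{-1},
\]
and the number of triples is $\binom nk\binom{u-n}q$. This gives
\[
\Prb[S'=T]=\binom un^{-1}\frac{\binom nk}{\binom nq}=\binom un^{-1},
\]
using $\binom nk=\binom nq$.

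A cleaner alternative avoids counting altogether. The law of $(S,D,I)$ is invariant under every permutation $\pi$ of $U$, because $\pi$ maps each uniform choice to a uniform choice. The map $(S,D,I)\mapsto S'$ is equivariant under $\pi$, so the law of $S'$ is a permutation-invariant distribution on $\binom Un$. The symmetric group acts transitively on $\binom Un$, so this law must be uniform. I would present the symmetry argument, with the count as a check.

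There is no real obstacle here. The only care needed is that the insertion order is a fixed function of the batch, so that the history depends only on $(S,D,I)$ and introduces no extra randomness correlated with $R$.
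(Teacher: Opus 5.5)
Your proof is correct and follows the paper's argument. You get independence because $(S,D,I)$ and the canonical order never inspect $R$, and uniformity of $S'$ from permutation invariance of the law of $(S,D,I)$ plus transitivity of the symmetric group on $\binom Un$; your explicit fiber count $\binom nk\binom{u-n}q$ just spells out the paper's one-line ``equivalently, the number of triples producing a fixed $S'$ is the same'' remark.
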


\begin{proof}
The sampling rules and the canonical order never inspect $R$.  Uniformity of
$S$ is by definition.  The law of $S'$ is invariant under every permutation
of $U$ and is supported on the $n$-subsets of $U$; therefore it is uniform.
Equivalently, the number of pairs $(S,D,I)$ producing a fixed $S'$ is the same
for every $S'$.
\end{proof}

The successor state $M'$ may depend on the full history $(S,D,I)$ and need not
be determined by $S'$ alone.  Nevertheless, $S'$ is independent of $R$ and
$M'$ has at most $2^H$ values, so
$I(S';R,M')=I(S';M'\mid R)\le H$.  This is all that is needed when the Carter
argument is applied to the successor experiment; zero false negatives again
support the posterior of $S'$ on $\binom{A'_{Z'}}n$.

The following averaging consequence will be used twice.

\begin{lemma}[Accepted-size average]\label{lem:fpr-average}
For the parent and successor experiments, respectively,
\[
  \E|A_Z|\le n+\varepsilon(u-n),
  \qquad
  \E|A'_{Z'}|\le n+\varepsilon(u-n).
  \tag{2.3}\label{eq:accepted-average}
\]
\end{lemma}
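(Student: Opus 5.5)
The plan is to write $|A_Z|$ as a sum of indicators, take expectations, and split the universe into members and nonmembers of the current set. On every tape, zero false negatives give $S\subseteq A_Z$. Hence
\[
  |A_Z| = n + \sum_{x\in U} \mathbf 1[x\notin S,\ x\in A_Z].
\]
Taking expectations reduces the parent bound to showing $\Prb[x\notin S,\ x\in A_Z]\le \varepsilon\,\Prb[x\notin S]$ for each fixed $x\in U$. Summing then gives $\varepsilon\sum_x\Prb[x\notin S]=\varepsilon(u-n)$, since exactly $u-n$ keys lie outside $S$.

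For this per-key bound, condition on the full history $h$, meaning the realization of $S$ together with the canonical insertion order. By \Cref{lem:experiments}, $h$ is independent of $R$. So conditioning on $h$ leaves $R$ with its original law, and $M=M_R(h)$ becomes a deterministic function of $R$. For $x\notin S(h)$, the pointwise guarantee \eqref{eq:fpr} applied to the fixed legal history $h$ gives
\[
  \Prb_R\bigl[x\in A_{t_0}(R,M_R(h))\bigr]\le\varepsilon.
\]
Averaging over $h$ with the indicator $\mathbf 1[x\notin S(h)]$ yields the claim. Measurability of the algorithm maps (assumed in \Cref{def:filter}) justifies this Fubini-type exchange between the finitely supported history and the tape.

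The successor bound follows the same argument with the history $h'=(S,D,I)$, which consists of the canonical build, then the batch deletion of $D$, then the batch insertion of $I$, in the fixed order. This word is legal, and by \Cref{lem:experiments} it is independent of $R$. The current set is $S'$, of size exactly $n$, and zero false negatives give $S'\subseteq A'_{Z'}$. For each fixed $h'$ and each $x\notin S'$, \eqref{eq:fpr} bounds the probability by $\varepsilon$ at time $t_1$. Summing over the $u-n$ nonmembers gives $\E|A'_{Z'}|\le n+\varepsilon(u-n)$.

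The only delicate point is that $M'$ may depend on the whole history, not just on $S'$. This is why we condition on the full history $h'$ rather than on $S'$ alone. The pointwise error guarantee holds for every fixed legal history, so no fiberwise or posterior reasoning is needed. Independence of the history from $R$ is exactly what lets us apply \eqref{eq:fpr} inside the conditional expectation. I expect no real obstacle beyond stating this conditioning carefully.
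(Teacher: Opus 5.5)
Your proposal is correct and is essentially the paper's argument: fix the tape-independent legal history (the full history $(S,D,I)$ in the successor case), use zero false negatives for the $n$ members and the pointwise guarantee \eqref{eq:fpr} for each of the $u-n$ nonmembers, then average over the tape and the independently sampled history. Your per-key indicator decomposition is only a reordering of the same summation, not a different route.
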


\begin{proof}
Fix any realized history before exposing $R$.  Sum \eqref{eq:fpr} over its
$u-n$ nonmembers and use zero false negatives for its $n$ members.  Averaging
over the tape and then over the independently sampled history proves the
claim.
\end{proof}

\section{The exact Carter energy}\label{sec:carter}

Write
\[
  \ell=\log_2(1/\varepsilon),\qquad
  \beta=1-\varepsilon,\qquad
  \delta=n/u,
\]
and define the tangent gap
\[
 \mathcal E_\varepsilon(x)
 =\log_2\frac\varepsilon x
  +\frac{x-\varepsilon}{\varepsilon\ln2},
 \qquad 0<x\le1.
 \tag{3.1}\label{eq:carter-energy}
\]
This function is nonnegative and vanishes only at $x=\varepsilon$.

For real $x>n-1$, let
\[
 f(x)=\log_2\binom xn
 :=\sum_{j=0}^{n-1}\log_2(x-j)-\log_2(n!).
\]
Conditioned fiberwise on $Z=z$, define the posterior deficit
\[
 d_z
 =\KL\left(\mu_z\middle\|\Unif\binom{A_z}{n}\right)
 =f(a_z)-H(S\mid Z=z).
 \tag{3.2}\label{eq:deficit}
\]
The first distribution is supported on the second because every represented
source set is accepted.

\begin{lemma}[Integrated Carter bound]\label{lem:carter}
For $X=a_Z/u$,
\[
 \E\mathcal E_\varepsilon(X)+\frac{\E d_Z}{n}
 \le \frac Hn-\ell+O_\varepsilon(\delta).
 \tag{3.3}\label{eq:integrated-carter}
\]
In particular, the accepted-set deviation and the posterior deficit spend the
same additive slack above the Carter bound.
\end{lemma}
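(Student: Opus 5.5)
The plan is to start from the information identity for the parent experiment and expand it fiberwise. Since $S$ is independent of $R$ and $M$ takes at most $2^H$ values,
\[
 \log_2\binom un-\E H(S\mid Z)=I(S;R,M)=I(S;M\mid R)\le H.
\]
Substituting the definition \eqref{eq:deficit}, $H(S\mid Z=z)=f(a_z)-d_z$, gives
\[
 \E d_Z\le H-\log_2\binom un+\E f(a_Z).
\]
What remains is to show that $\log_2\binom un-\E f(a_Z)$ is at least $n\ell+n\,\E\mathcal E_\varepsilon(X)-O_\varepsilon(\delta n)$. Dividing by $n$ then yields \eqref{eq:integrated-carter}.

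For the falling-factorial ratio I would write
\[
 \log_2\binom un-f(a)=\sum_{j=0}^{n-1}\log_2\frac{u-j}{a-j}.
\]
Since $a\ge n$, we have $a-j\le a(1-j/u)\cdot(\text{correction})$; more precisely, $(u-j)/(a-j)\ge u/a$ whenever $a\le u$. This gives the clean bound
\[
 \log_2\binom un-f(a_Z)\ge n\log_2\frac1X.
\]
Next I use $\log_2(1/X)=\ell+\log_2(\varepsilon/X)=\ell+\mathcal E_\varepsilon(X)-\frac{X-\varepsilon}{\varepsilon\ln2}$. Taking expectations, the only leftover term is $-\frac{n}{\varepsilon\ln 2}(\E X-\varepsilon)$. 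By \Cref{lem:fpr-average}, $\E X\le \varepsilon+(1-\varepsilon)\delta$, so this term is at least $-\frac{(1-\varepsilon)\delta n}{\varepsilon\ln2}$, which is $-O_\varepsilon(\delta)n$. That is exactly the claimed error. The tangent term was chosen so that this linear correction is controlled by the first-moment accepted-size bound, and this is the point of the definition.

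The main place needing care is the inequality $(u-j)/(a-j)\ge u/a$. It holds because $a\le u$ implies $a(u-j)\ge u(a-j)$, which is equivalent to $ju\ge ja$. One also needs $f(a_Z)$ to be well defined, which holds because $a_Z\ge n$ by zero false negatives. The fiberwise definition of $d_z$ on measure-zero tape events needs to be integrated over $r$ rather than conditioned on $R=r$. I would phrase this via $I(S;M\mid R)=\E_r I(S;M_r(S))$ with each fiber law $\mu_{r,m}$ uniform. Thus $H(S\mid R=r,M=m)=\log_2|\cF_{r,m}|$, and the bound $\log_2\binom un-\E\log_2|\cF_{R,M}|\le H$ is the usual counting over the $2^H$ fibers for each fixed $r$. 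Finally, nonnegativity of $\mathcal E_\varepsilon$ is not needed for the inequality itself, only for its interpretation.
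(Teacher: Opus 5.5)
Your proposal is correct and follows the paper's proof. It uses the same identity $H\ge I(S;R,M)=\log_2\binom un-\E f(a_Z)+\E d_Z$, the same termwise bound $(u-j)/(a-j)\ge u/a$ giving $n\log_2(1/X)$, and the same rewriting through the tangent gap, where $\E X\le\varepsilon+\beta\delta$ yields the error $\beta\delta/(\varepsilon\ln 2)$. (The stray ``(correction)'' factor is unnecessary, since $a-j\le a(1-j/u)$ already holds exactly when $a\le u$.)
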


\begin{proof}
Put $\Lambda_n=\log_2\binom un$.  Since $S$ is independent of $R$,
\[
 H\ge I(S;R,M)=\Lambda_n-\E f(a_Z)+\E d_Z.
 \tag{3.4}\label{eq:carter-decomp}
\]
For every $a\in[n,u]$, the product formula gives
\[
 \Lambda_n-f(a)
 =\sum_{j=0}^{n-1}\log_2\frac{u-j}{a-j}
 \ge n\log_2\frac ua.
\]
Consequently,
\[
 \frac Hn\ge\E[-\log_2X]+\frac{\E d_Z}{n}.
 \tag{3.5}\label{eq:carter-pointwise}
\]
By \Cref{lem:fpr-average},
$\E X\le\varepsilon+\beta\delta$.  Substituting
\eqref{eq:carter-energy} into \eqref{eq:carter-pointwise} therefore yields
\[
 \E\mathcal E_\varepsilon(X)+\frac{\E d_Z}{n}
 \le\frac Hn-\ell+\frac{\beta\delta}{\varepsilon\ln2},
\]
which is \eqref{eq:integrated-carter}.
\end{proof}

\section{The joint posterior replacement lemma}\label{sec:replacement}

Fix a parent fiber value $z=(r,m)$ and abbreviate $A=A_z$, $a=|A|$.
Let
\[
  \mu_z=\Unif(\cF_{r,m}),
  \qquad
  \nu_z=\Unif\binom An.
\]
Starting from either source law, apply the same kernel: sample
$D\in\binom Sq$ uniformly, then $I\in\binom{U\setminus S}q$ uniformly, and
output $K=S\setminus D$.  Denote the resulting joint laws of $(D,I,K)$ by
$P_z$ and $Q_z$.

For an actual branch $(z,d,i)$ of positive probability, define
\[
  V_{z,d,i}
  =\bigcup\supp P_z(K\mid D=d,I=i),
  \qquad v_{z,d,i}=|V_{z,d,i}|.
  \tag{4.1}\label{eq:V-def}
\]

\begin{lemma}[Joint replacement chain rule]\label{lem:joint-kl}
For every parent fiber,
\[
 \E_{(D,I)\sim P_z}
 \KL\left(P_z(K\mid D,I)\middle\|Q_z(K\mid D,I)\right)
 \le d_z.
 \tag{4.2}\label{eq:conditional-kl}
\]
For every actual branch $(d,i)$,
\[
  Q_z(K\mid D=d,I=i)
  =\Unif\binom{A\setminus(d\cup i)}k.
  \tag{4.3}\label{eq:reference-conditional}
\]
In particular, if
\[
  b_{z,d,i}=a-q-|i\cap A|,
  \tag{4.4}\label{eq:b-def}
\]
then
\[
 \KL\left(P_z(K\mid d,i)\middle\|Q_z(K\mid d,i)\right)
 \ge\log\frac{\binom{b_{z,d,i}}k}{\binom{v_{z,d,i}}k}
 \ge k\log\frac{b_{z,d,i}}{v_{z,d,i}}.
 \tag{4.5}\label{eq:support-kl}
\]
\end{lemma}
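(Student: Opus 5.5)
The plan has three parts: the chain rule for relative entropy applied to the joint laws $P_z,Q_z$ of $(S,D,I,K)$, an explicit computation of the reference conditional, and a support comparison.

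\emph{Step 1: the bound \eqref{eq:conditional-kl}.} Extend both joint laws to include the source set $S$. Under $P_z$ the source has law $\mu_z$, under $Q_z$ it has law $\nu_z$, and the kernel $S\mapsto(D,I,K)$ is the same in both experiments. The chain rule therefore gives
\[
\KL(P_z(S,D,I,K)\,\|\,Q_z(S,D,I,K))=\KL(\mu_z\|\nu_z)=d_z .
\]
Next, $K=S\setminus D$ and $S=K\cup D$, so $(D,I,K)$ determines $S$ and is determined by $(S,D,I)$. Hence the joint divergence of $(D,I,K)$ also equals $d_z$; monotonicity under the projection alone already gives $\le d_z$. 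Applying the chain rule again in the order $(D,I)$ first, then $K\mid(D,I)$, writes $d_z$ as the divergence of the $(D,I)$-marginals plus the $P_z$-averaged conditional divergence. The marginal term is nonnegative, so it can be dropped, which yields \eqref{eq:conditional-kl}. Absolute continuity holds because $\supp\mu_z\subseteq\binom An=\supp\nu_z$ by zero false negatives.

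\emph{Step 2: the formula \eqref{eq:reference-conditional}.} Compute the $Q_z$-law of $S$ given $(D,I)=(d,i)$ by Bayes. For $s\in\binom An$ with $d\subseteq s$ and $i\cap s=\emptyset$, the weight is
\[
\binom{a}{n}^{-1}\binom nq^{-1}\binom{u-n}q^{-1}.
\]
This is constant in $s$, so the posterior of $S$ is uniform on
\[
\{s\in\tbinom An: d\subseteq s,\ s\cap i=\emptyset\}.
\]
Writing $s=d\cup K$ with $K\in\binom{A\setminus(d\cup i)}{k}$ is a bijection. Since $d\subseteq S\subseteq A$ on actual branches, we have $|A\setminus(d\cup i)|=a-q-|i\cap A|=b_{z,d,i}$. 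Every actual branch of $P_z$ is also a branch of positive $Q_z$-probability, so the conditional is well defined.

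\emph{Step 3: the bound \eqref{eq:support-kl}.} Let $P'=P_z(K\mid d,i)$ and $Q'=\Unif\binom{A\setminus(d\cup i)}k$. Each set in $\supp P'$ is contained in $V=V_{z,d,i}$, so $\supp P'\subseteq\binom Vk$. We also have $V\subseteq A\setminus(d\cup i)$, because each $K$ avoids $d$ and $i$ and lies in $A$. Use the standard inequality
\[
\KL(P'\|Q')\ge-\log Q'(\supp P'),
\]
which is the data-processing inequality applied to the indicator of the support. This gives
\[
\KL(P'\|Q')\ge\log\frac{\binom{b}{k}}{\binom vk}.
\]
The final inequality follows from the product expansion
\[
\frac{\binom bk}{\binom vk}=\prod_{j<k}\frac{b-j}{v-j}\ge\Bigl(\frac bv\Bigr)^k,
\]
where each factor is at least $b/v$ because $v\le b$.

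The only delicate point is the justification in Step 1: the kernel must be identical under both source laws, and the actual branches must have positive $Q_z$-mass. Everything after that is direct computation.
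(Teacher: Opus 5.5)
Your proof is correct and follows essentially the same route as the paper: a common kernel plus the chain rule with the marginal term dropped, the constant weight of feasible triples for the reference conditional, and a support comparison followed by the product bound. The differences are cosmetic. You note that the joint divergence of $(D,I,K)$ actually equals $d_z$, and you use $\KL(P'\|Q')\ge-\log Q'(\supp P')$ where the paper bounds the entropy of $P'$ by $\log\binom vk$; the two are equivalent here since $Q'$ is uniform. You also make explicit the containment $V\subseteq A\setminus(d\cup i)$ that gives $v\le b$.
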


\begin{proof}
If $P_z(d,i)>0$, an actual compatible source in $\cF_{r,m}$ also has positive
probability under $\nu_z$; hence $P_z(D,I)\ll Q_z(D,I)$ and the reference
conditional law below is defined on every actual branch.
Applying a common stochastic kernel cannot increase relative entropy, so
\[
  \KL(P_z(D,I,K)\|Q_z(D,I,K))
  \le\KL(\mu_z\|\nu_z)=d_z.
\]
The chain rule expands the left-hand side as
\[
 \KL(P_z(D,I)\|Q_z(D,I))
 +\E_{(D,I)\sim P_z}
   \KL(P_z(K\mid D,I)\|Q_z(K\mid D,I)).
\]
Dropping the first nonnegative term gives \eqref{eq:conditional-kl}.  Notice
that this expectation uses the \emph{actual} branch marginal $P_z(D,I)$.

Under the reference law, every feasible triple $(d,i,k_0)$ has probability
\[
  \frac1{\binom an\binom nq\binom{u-n}q},
\]
because it uniquely specifies the source $d\cup k_0$.  Thus, after fixing
$(d,i)$, all $k$-subsets of $A\setminus(d\cup i)$ are equally likely,
proving \eqref{eq:reference-conditional}.  Since $d\subseteq A$,
$|d|=q$, and $d\cap i=\varnothing$, this ground set has size
$a-q-|i\cap A|$.

The actual conditional law is supported on $k$-subsets of $V=V_{z,d,i}$.
Therefore its entropy is at most $\log\binom vk$, whereas the reference law is
uniform on $\binom bk$.  This proves the first inequality in
\eqref{eq:support-kl}.  The second follows from
\[
  \frac{\binom bk}{\binom vk}
  =\prod_{j=0}^{k-1}\frac{b-j}{v-j}
  \ge\left(\frac bv\right)^k.
\]
\end{proof}

The relative entropy in \eqref{eq:conditional-kl} is deliberately taken only
after jointly revealing both batches.  We neither assert nor need that the
reference marginal $Q_z(I\mid D)$ is uniform.

For an actual branch, normalize its conditional posterior cost as
\[
 \kappa
 =\frac1n\KL\left(P_Z(K\mid D,I)\middle\|Q_Z(K\mid D,I)\right).
 \tag{4.6}\label{eq:kappa}
\]
The chain rule and \eqref{eq:support-kl} give
\[
 \E\kappa\le\frac{\E d_Z}{n},
 \qquad
 \frac{v_{Z,D,I}}u
 \ge(X-2\lambda\delta)2^{-\kappa/(1-\lambda)}.
 \tag{4.7}\label{eq:average-kappa-thickness}
\]
Indeed, $k\ge(1-\lambda)n$ and
$b=a-q-|I\cap A|\ge a-2q\ge a-2\lambda n$.  Thus
$\log_2(b/v)\le n\kappa/k\le\kappa/(1-\lambda)$.

\section{A common successor and its outside reservoir}\label{sec:forcing}

We now use the dynamic semantics.  Fix an actual branch $(z,d,i)$ of positive
probability.  Let $m_i$ denote the state obtained, on tape $r$, by applying
the canonical word
\[
  \mathsf{Delete}(d)\ ;\ \mathsf{Insert}(i)
\]
to the parent state $m$.  The notation suppresses the dependence on
$(z,d)$.

\begin{lemma}[Common-word forcing]\label{lem:common-word}
Let $A'=A_{t_1}(r,m_i)$.  Then
\[
  V_{z,d,i}\cup i\subseteq A'.
  \tag{5.1}\label{eq:common-forcing}
\]
Moreover, $V_{z,d,i}\subseteq A$, and hence
\[
  |A'\setminus A|\le |A'|-v_{z,d,i}.
  \tag{5.2}\label{eq:outside-capacity}
\]
\end{lemma}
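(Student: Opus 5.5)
The plan is to unwind the definition of $V_{z,d,i}$ and use zero false negatives on each compatible source. Fix an actual branch $(z,d,i)$ with $z=(r,m)$. Take $y\in V_{z,d,i}$. By \eqref{eq:V-def} there is a set $k_0$ with $y\in k_0$ and $P_z(K=k_0\mid D=d,I=i)>0$. Since $P_z$ is generated from $\mu_z=\Unif(\cF_{r,m})$ by the replacement kernel, there is a source $s\in\cF_{r,m}$ such that:
\begin{itemize}
\item $d\subseteq s$ and $i\subseteq U\setminus s$;
\item $k_0=s\setminus d$;
\item $M_r(s)=m$.
\end{itemize}

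Next consider the canonical legal history for this source. It builds $s$ in universe order, reaching state $m$ at time $t_0=n$. It then applies $\mathsf{Delete}(d)$ and $\mathsf{Insert}(i)$ in universe order within each batch. This word is legal for $s$, because $d\subseteq s$, $i\cap s=\varnothing$, and the size stays in $[n-q,n]$.

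The key observation is that the transitions $\Delta^{\mathsf{op}}_{r,t}$ depend only on the tape, the time, the current state and the key. Hence the successor state is a function of $(r,m,d,i)$ alone. It equals $m_i$ for every compatible source $s$ in the fiber. This is exactly why the notation $m_i$ is well defined; I expect this to be the only subtle point, and it is immediate from the model.

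Zero false negatives now apply to this legal history, whose current set at time $t_1$ is $(s\setminus d)\cup i=k_0\cup i$. So every element of $k_0\cup i$ lies in $A_{t_1}(r,m_i)=A'$. Ranging over all $y$ and all supporting $k_0$ gives $V_{z,d,i}\cup i\subseteq A'$, which is \eqref{eq:common-forcing}.

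For the containment $V_{z,d,i}\subseteq A$: each such $k_0$ satisfies $k_0\subseteq s$. Also $s\subseteq A_{t_0}(r,m)=A$, by zero false negatives for the parent build (equivalently, $\mu_z$ is supported on $\binom An$). Hence $V_{z,d,i}\subseteq A$.

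Finally, $A'\setminus A$ and $V_{z,d,i}$ are disjoint subsets of $A'$: the first misses $A$, and the second lies in both $A$ and $A'$. Therefore $|A'\setminus A|+v_{z,d,i}\le|A'|$, which is \eqref{eq:outside-capacity}.
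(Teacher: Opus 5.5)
Your proof is correct and follows the paper's argument essentially step for step. The steps are: each supported $K$ gives a compatible source $d\cup K$ in the fiber $\cF_{r,m}$; the same delete--insert word is legal for all of them, and fixed-tape determinism sends them all to the common state $m_i$; zero false negatives then give $K\cup i\subseteq A'$ and $K\subseteq A$; finally $V\subseteq A\cap A'$ yields \eqref{eq:outside-capacity}. You are slightly more explicit than the paper about why $m_i$ depends only on $(r,m,d,i)$ and about the disjointness count, but the route is the same.
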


\begin{proof}
For every $K$ in the conditional support in \eqref{eq:V-def}, the set
$d\cup K$ is represented by the same parent state $m$ on tape $r$.  The
branch is compatible, so $d\subseteq d\cup K$ and
$i\cap(d\cup K)=\varnothing$.  Thus the same delete--insert word is legal for
every such source.  Fixed-tape determinism sends all of them from $m$ to the
same state $m_i$.  Their successor sets are $K\cup i$, which must be accepted
by zero false negatives.  Taking the union over $K$ proves
\eqref{eq:common-forcing}.  Every compatible $K$ is a subset of a parent
source represented at $m$, so zero false negatives also give $V\subseteq A$.
It follows that $V\subseteq A\cap A'$, proving
\eqref{eq:outside-capacity}.
\end{proof}

\begin{lemma}[Average successor reservoir]\label{lem:average-reservoir}
Let
\[
 Y=|A'_{Z'}|/u,\qquad C=A'_{Z'}\setminus A_Z,\qquad c=|C|/u.
\]
Then, branchwise,
\[
 c\le Y-(X-2\lambda\delta)2^{-\kappa/(1-\lambda)},
 \tag{5.3}\label{eq:branch-reservoir}
\]
and consequently
\[
 \E c\le
 \varepsilon-\E[X2^{-\kappa/(1-\lambda)}]+O_\lambda(\delta).
 \tag{5.4}\label{eq:average-reservoir}
\]
\end{lemma}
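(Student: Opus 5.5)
The argument runs branch by branch and is then averaged. Fix an actual branch $(z,d,i)$ of positive probability, where $z=(r,m)$. On this branch the successor state is $M'=m_i$, the state reached by applying the canonical word $\mathsf{Delete}(d)\,;\,\mathsf{Insert}(i)$ to $m$ on tape $r$. This is by the construction of the replacement experiment, since the history is deterministic once $r$ and $(S,D,I)$ are fixed. Hence $A'_{Z'}=A_{t_1}(r,m_i)=A'$ in the notation of \Cref{lem:common-word}, and $C=A'\setminus A$. By \eqref{eq:outside-capacity},
\[
 |C|\le |A'|-v_{z,d,i},
\]
so dividing by $u$ gives $c\le Y-v_{Z,D,I}/u$.

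We insert the thickness bound \eqref{eq:average-kappa-thickness}, namely $v_{Z,D,I}/u\ge (X-2\lambda\delta)2^{-\kappa/(1-\lambda)}$. This yields \eqref{eq:branch-reservoir} directly. The bound holds on every actual branch, and the full law of $(R,S,D,I)$ is supported on actual branches.

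For \eqref{eq:average-reservoir}, take expectations over the joint experiment. The term $\E Y$ is controlled by the successor half of \Cref{lem:fpr-average}: $\E Y\le \delta+\varepsilon(1-\delta)\le \varepsilon+\delta$. For the subtracted term, $2^{-\kappa/(1-\lambda)}\le 1$ because $\kappa\ge0$, so
\[
 \E\bigl[(X-2\lambda\delta)2^{-\kappa/(1-\lambda)}\bigr]\ge \E\bigl[X2^{-\kappa/(1-\lambda)}\bigr]-2\lambda\delta.
\]
Combining the two gives $\E c\le \varepsilon-\E[X2^{-\kappa/(1-\lambda)}]+(1+2\lambda)\delta$, which is of the form $O_\lambda(\delta)$.

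The main point needing care is the identification of $M'$ with $m_i$ and the measure-theoretic bookkeeping. The statements for $Z=z$ are fiberwise and integrated over $r$, while $\kappa$ is defined through the fiber law $P_z$. I would check that the distribution of $(S,D,I)$ given the fiber $(r,m)$ coincides with $P_z$, so that branchwise $\kappa$ is a well-defined random variable on the experiment. This holds because $S$ is uniform and independent of $R$, so conditioning on $M_r(S)=m$ gives $\mu_{r,m}$, and the kernel \eqref{eq:replacement-law} is the same one used to define $P_z$. With this identification, every step above is pointwise, and the averaging is a single application of linearity of expectation.
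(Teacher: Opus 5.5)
Your proposal is correct and follows essentially the same route as the paper: apply \eqref{eq:outside-capacity} on each branch, insert the thickness bound from \eqref{eq:average-kappa-thickness}, and then average using the successor half of \Cref{lem:fpr-average}. Your explicit identification $M'=m_i$ and your absorption of the $2\lambda\delta$ term via $2^{-\kappa/(1-\lambda)}\le1$ are details the paper leaves implicit.
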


\begin{proof}
By \Cref{lem:common-word},
$|C|\le |A'_{Z'}|-v_{Z,D,I}$.  The second inequality in
\eqref{eq:average-kappa-thickness} proves \eqref{eq:branch-reservoir}.
Averaging and using $\E Y\le\varepsilon+\beta\delta$ from
\Cref{lem:fpr-average} proves \eqref{eq:average-reservoir}.
\end{proof}

\section{Integrated replacement covering}\label{sec:integrated-cover}

The next lemma counts no selected family of branches.  It reads the required
state information directly from the random insertion batch.

\begin{lemma}[Reservoir entropy]\label{lem:reservoir-entropy}
Let
\[
 p=1-X,\qquad T=|I\setminus A_Z|,\qquad s=T/q,
 \qquad c=|A'_{Z'}\setminus A_Z|/u.
\]
Then
\[
 H\ge q\,\E\left[s\log_2\frac pc\right].
 \tag{6.1}\label{eq:raw-reservoir-entropy}
\]
Moreover, if $\E c\le c_0+o(1)$ for a fixed $c_0<\beta$, then
\[
 \frac Hn\ge
 \lambda\beta\log_2\frac\beta{c_0}-o(1).
 \tag{6.2}\label{eq:integrated-cover}
\]
\end{lemma}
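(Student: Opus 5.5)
Since $R$ and the parent state $M$ are both fixed by $Z$, the plan is to bound $H$ below by the information the successor state $M'$ carries about the insertion batch $I$ given $Z$ and $D$. First, $I(I;M'\mid Z,D)\le H(M')\le H$. Next, the fresh part $I\setminus A_Z$ of the batch must lie in $C=A'_{Z'}\setminus A_Z$, by zero false negatives on the successor set $S'\supseteq I$. Conditioned on $(Z,D)$, $I$ is uniform on $\binom{U\setminus S}{q}$. Given $T=|I\setminus A_Z|$, the set $I\setminus A_Z$ is therefore uniform among $T$-subsets of $U\setminus A_Z$, a set of size $u-a_Z=pu$, and it is independent of $I\cap A_Z$. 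Knowing $(Z,D,M')$ and $T$ confines it to $\binom{C}{T}$. We obtain
\[
 H\ge I(I;M'\mid Z,D)\ge \E\left[\log_2\binom{pu}{T}-\log_2\binom{cu}{T}\right]-H(T\mid Z,D),
\]
and $H(T)\le\log_2(q+1)=o(n)$. To keep $c$ outside the expectation as random, I would condition on the full $(Z,D,I\cap A_Z,T)$ and apply the standard bound ``entropy of a set restricted to a random superset $C$ determined by $M'$''. Precisely, $H(I\setminus A_Z\mid Z,D,T,M')\le\E\log_2\binom{|C|}{T}$, because given $(Z,M')$ the set $C$ is determined. This yields the first form.

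Second, I use the ratio estimate $\log_2\binom{pu}{T}-\log_2\binom{cu}{T}\ge T\log_2\frac{pu}{cu}$, valid when $cu\le pu$. It follows termwise from $\binom{x}{T}/\binom{y}{T}=\prod_j\frac{x-j}{y-j}\ge(x/y)^T$ for $x\ge y\ge T$, exactly as in \eqref{eq:support-kl}. When $c>p$ the term may be negative, and I will simply handle the sign through the expectation. The inequality is stated with the logarithm of $p/c$ and could be negative pointwise, but the entropy bound holds as stated with $\log\binom{\cdot}{\cdot}$ and the product comparison holds both ways up to sign. The $o(n)$ entropy of $T$ is absorbed, or it can be eliminated by noting that $T$ is determined by $(Z,I)$ and working with conditional mutual information given $T$. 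This gives \eqref{eq:raw-reservoir-entropy}, possibly with an $o(n)$ loss that I would either absorb or remove by this conditioning.

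For \eqref{eq:integrated-cover}, write $s\log_2(p/c)=s\log_2 p-s\log_2 c$. Conditionally on $(Z,D)$, $T$ is hypergeometric with mean $q\cdot\frac{u-a_Z}{u-n}\approx qp$, and it concentrates with fluctuations $O(n^{-1/2})$ relative to $q$. So $s=p+O(\delta)+O(n^{-1/2})$ in $L^1$. This reduces the bound to $\lambda\,\E[p\log_2(p/c)]-o(1)$. By the log-sum inequality (joint convexity of $(p,c)\mapsto p\log(p/c)$), $\E[p\log_2(p/c)]\ge \E p\log_2(\E p/\E c)$. Here $\E p\ge\beta-\beta\delta$ by \Cref{lem:fpr-average}, and $x\mapsto x\log(x/c_0)$ is increasing for $x\ge c_0/e$, so with $\E c\le c_0+o(1)$ we get $\lambda\beta\log_2(\beta/c_0)-o(1)$.

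The main obstacle is the replacement of $s$ by $p$. The factor $\log_2(1/c)$ is unbounded when $c$ is tiny, so an $L^1$ error in $s$ cannot naively multiply it. My plan is to truncate: replace $c$ by $\max(c,\eta)$, which only weakens the raw bound since $\log(p/c)$ decreases in $c$. Then $|\log c|\le\log(1/\eta)$, and the $s$-to-$p$ error is $O(\log(1/\eta)(\delta+n^{-1/2}))$. Finally let $\eta\to0$ slowly, noting that $\E\max(c,\eta)\le c_0+\eta+o(1)$.
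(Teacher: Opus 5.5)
Your proof of (6.1) is essentially the paper's argument. You fix the parent data and note that $L=I\setminus A_Z$ is uniform on $\binom{U\setminus A_Z}{T}$ given $T$. Then $M'$ together with $Z$ determines $C$, and zero false negatives give $L\subseteq C$. Three small corrections:
\begin{enumerate}
\item You should also condition on $S$; the paper uses $W=(R,M,S,D)$. The reason is that $I$ is uniform on $\binom{U\setminus S}{q}$ only given $S$. Nothing changes, because the law of $L$ given $T$ does not depend on $S$.
\item You never pay $H(T\mid Z,D)$. Directly, $H\ge H(M'\mid W)\ge H(M'\mid W,T)\ge I(L;M'\mid W,T)$.
\item The sign worry is vacuous. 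Since $C\subseteq U\setminus A_Z$, you have $c\le p$ on every branch, so $\binom{pu}{T}/\binom{cu}{T}\ge (p/c)^T$ holds pointwise. For $x<y$ the product comparison actually reverses, rather than holding ``up to sign'', but that case never occurs.
\end{enumerate}

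For (6.2) you take a genuinely different route. The paper never swaps $s$ for $p$ inside a logarithm. It sets $d=sc/p$, so that $s\log_2(p/c)=h(s,d)$ with $h(a,b)=a\log_2(a/b)$, and applies joint convexity to $(s,d)$. It then uses $c\le p$ to get $d\le c+|s-p|$. The hypergeometric $L^1$ error therefore enters only additively through $\E d\le c_0+o(1)$. No unbounded logarithm is ever multiplied by that error, and no truncation parameter is needed.

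Your approach (truncate at $\eta$, then log-sum on $(p,\max(c,\eta))$ with $\eta\to0$ slowly) also works, but you have overlooked that $\log_2 p$ is unbounded too. On some branches $p$ can be as small as $1/u$, and $u/n$ is unrestricted. So $\E\bigl[|s-p|\,|\log_2 p|\bigr]$ is not controlled by $\E|s-p|=O(\delta+n^{-1/2})$, and truncating $c$ does not touch this term. Your claimed error $O(\log(1/\eta)(\delta+n^{-1/2}))$ is therefore unjustified as written.

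The repair is the tower property. Since $p$ is a function of $Z$ and $\E[s\mid Z,D]=p/(1-\delta)$, you get exactly $\E[(s-p)\log_2 p]=\frac{\delta}{1-\delta}\E[p\log_2 p]=O(\delta)$. Only the term $-s\log_2\max(c,\eta)$ then needs your bound $\log_2(1/\eta)\,\E|s-p|$.

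With that fix your argument is a correct, more hands-on alternative. The paper's perspective substitution is shorter, parameter-free, and handles both logarithms at once.
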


\begin{proof}
Condition on $W=(R,M,S,D)$, which fixes the parent accepted set $A=A_Z$.
Put $L=I\setminus A$.  Since $S\subseteq A$ and $I$ is uniform in
$\binom{U\setminus S}q$, conditional on $(W,T)$ the set $L$ is uniform in
$\binom{U\setminus A}T$.  On the other hand, $M'$ determines
$C=A'_{Z'}\setminus A$, and zero false negatives imply $L\subseteq C$.
Therefore
\begin{align*}
 H
 &\ge H(M'\mid W)
 \ge H(M'\mid W,T)
 \ge I(L;M'\mid W,T)\\
 &\ge\E\log_2\frac{\binom{pu}T}{\binom{cu}T}
 \ge q\,\E\left[s\log_2\frac pc\right].
\end{align*}
The last step uses
$\binom NT/\binom KT\ge(N/K)^T$.  We use the usual continuous conventions
when $T=0$; branches with $p=0$ have $T=0$ and contribute zero.

To average the logarithm without assuming independence, set
$d=sc/p$ when $p>0$, and set $d=0$ when $p=0$.  Since $C\subseteq U\setminus
A$, we have $c\le p$, and hence
\[
 d\le c+|s-p|.
 \tag{6.3}\label{eq:d-control}
\]
The perspective $h(a,b)=a\log_2(a/b)$ is jointly convex on the nonnegative
orthant, with its lower-semicontinuous boundary values.  Thus
\[
 \E\left[s\log_2\frac pc\right]
 =\E h(s,d)\ge h(\E s,\E d).
 \tag{6.4}\label{eq:perspective-jensen}
\]
Conditioned on $W$, the variable $T$ is hypergeometric and
\[
 \E[s\mid W]=\frac p{1-\delta},
 \qquad
 \E\left[\left|s-\frac p{1-\delta}\right|\middle|W\right]
 \le\frac1{2\sqrt q}.
\]
The accepted-size average gives $\E s\ge\beta$, while
\[
 \E|s-p|=O(q^{-1/2}+\delta)=o(1).
\]
It follows from \eqref{eq:d-control} that $\E d\le c_0+o(1)$.  Since
$c_0<\beta$, the function $h(a,b)$ is increasing in $a$ and decreasing in
$b$ in a neighborhood of $(\beta,c_0)$.  Combining
\eqref{eq:raw-reservoir-entropy}--\eqref{eq:perspective-jensen} with
$q/n=\lambda+o(1)$ proves \eqref{eq:integrated-cover}.
\end{proof}

\begin{lemma}[Coupled reservoir entropy]\label{lem:coupled-reservoir-entropy}
With the notation of \Cref{lem:reservoir-entropy}, let
\[
 m=\frac{\E X}{\varepsilon},\qquad
 \bar p=\E(1-X)=1-\varepsilon m.
\]
If $\E c\le c_0+o(1)$ and $c_0<\bar p$, then
\[
 \frac Hn\ge
 \lambda\bar p\log_2\frac{\bar p}{c_0}-o(1).
 \tag{6.5}\label{eq:coupled-reservoir}
\]
\end{lemma}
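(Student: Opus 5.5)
We rerun the proof of \Cref{lem:reservoir-entropy} unchanged up to the Jensen step. The only modification is that the lower bound $\E s\ge\beta$ is replaced by the exact mean of $s$. Conditioning on $W=(R,M,S,D)$ and arguing as before, \eqref{eq:raw-reservoir-entropy} gives
\[
 H\ge q\,\E\left[s\log_2\frac pc\right]=q\,\E h(s,d),
\]
with $d=sc/p$ on $\{p>0\}$ and $d=0$ otherwise. Joint convexity of the perspective $h(a,b)=a\log_2(a/b)$ then yields $\E h(s,d)\ge h(\E s,\E d)$, exactly as in \eqref{eq:perspective-jensen}. Nothing so far used the value of $\E s$, so the whole argument reduces to locating $(\E s,\E d)$ accurately.

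For the first coordinate, the hypergeometric identity $\E[s\mid W]=p/(1-\delta)$ gives
\[
 \E s=\frac{\E(1-X)}{1-\delta}=\frac{\bar p}{1-\delta}=\bar p+O(\delta).
\]
Thus $\E s=\bar p+o(1)$ with no inequality lost; this is the point where the parent outside mass $1-X$ is kept instead of being replaced by $\beta$. For the second coordinate, \eqref{eq:d-control} together with $\E|s-p|=O(q^{-1/2}+\delta)$ from the previous proof gives
\[
 \E d\le \E c+\E|s-p|\le c_0+o(1).
\]

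To conclude, note that at the point $(\bar p,c_0)$ we have $\partial_a h=\log_2(a/b)+1/\ln2>0$ because $c_0<\bar p$, and $\partial_b h<0$. Hence on a small neighborhood of $(\bar p,c_0)$ the function $h$ is increasing in $a$, decreasing in $b$, and continuous. Since $\E s\to\bar p$ and $\limsup \E d\le c_0$, we get $h(\E s,\E d)\ge h(\bar p,c_0)-o(1)$. Combining this with $q/n=\lambda+o(1)$ gives \eqref{eq:coupled-reservoir}.

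The main obstacle is that $m$, and hence $\bar p$, depends on $n$, so the limit statement has to be uniform. I would handle this by passing to subsequences along which $\bar p$ converges, using compactness of $[0,1]$. Alternatively, one can note that $h$ is uniformly continuous and monotone on compact regions where $b$ is bounded below and $b\le a-\eta$, provided we read the hypothesis $c_0<\bar p$ as holding with a fixed margin $\eta$. A second technicality is a possibly vanishing $\E d$ when $c_0=0$. There $h(a,b)\to\infty$ as $b\to0$, so the bound is infinite and holds trivially, consistent with the boundary conventions.
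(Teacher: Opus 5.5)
Your proposal is correct and follows the paper's argument. You rerun the raw bound \eqref{eq:raw-reservoir-entropy} with the perspective Jensen step, obtain $\E s=\bar p+o(1)$ from the hypergeometric mean and $\E d\le c_0+o(1)$ from \eqref{eq:d-control}, and conclude by continuity and monotonicity of $h$ near $(\bar p,c_0)$. Your uniformity remarks make explicit what the paper leaves implicit, but for $c_0=0$ the conclusion is not trivial: it should be read as $H/n\to\infty$, which your argument gives since $\E d\to0$ while $\E s$ stays near $\bar p\ge\beta-o(1)$.
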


\begin{proof}
In the proof of \Cref{lem:reservoir-entropy}, the hypergeometric estimate gives
\[
 \E|s-(1-X)|=o(1).
\]
Thus $\E s=\bar p+o(1)$ and the auxiliary variable
$d=sc/(1-X)$ satisfies $\E d\le c_0+o(1)$.  Applying the perspective Jensen
inequality \eqref{eq:perspective-jensen}, and using that
$h(a,b)=a\log_2(a/b)$ is increasing in $a$ and decreasing in $b$ whenever
$0<b<a$, gives \eqref{eq:coupled-reservoir}.
\end{proof}

\section{The coupled variational theorem}\label{sec:variational}

We now define the single variational object used in the main theorem.  The
definitions are collected here so that the introduction can explain the proof
without interrupting its narrative.

Write
\[
 e(y)=\log_2\frac1y+\frac{y-1}{\ln2},
\]
and let $y(a)\in(0,1]$ be the lower solution of $e(y)=a$.  For
$0<\lambda<1$, define the scalar survivor profile
\[
 \phi_{\varepsilon,\lambda}(a)=
 \begin{cases}
  \varepsilon y(a),&a\le e(\lambda),\\[1mm]
  \varepsilon\lambda\,2^{-(a-e(\lambda))/(1-\lambda)},&a\ge e(\lambda).
 \end{cases}
 \label{eq:phi-profile}
\]
For $m\in[y(a),1]$, define the coupled survivor profile
\[
 \psi_\lambda(m,a)=
 \varepsilon\exp\left(
 \frac{m-1-\lambda\ln m-a\ln2}{1-\lambda}\right),
 \qquad
 B_{\varepsilon,\lambda}(a,m)=
 \max\{\phi_{\varepsilon,\lambda}(a),\psi_\lambda(m,a)\}.
 \label{eq:coupled-survivor}
\]
The coupled envelope is
\[
 \widetilde{\mathcal F}_\varepsilon(a)=
 \sup_{0<\lambda<1}\inf_{y(a)\le m\le1}
 \lambda(1-\varepsilon m)
 \left[\log_2\frac{1-\varepsilon m}
 {\varepsilon-B_{\varepsilon,\lambda}(a,m)}\right]_+,
 \label{eq:coupled-envelope}
\]
where a term with $B_{\varepsilon,\lambda}(a,m)\ge\varepsilon$ is interpreted
as $+\infty$.  Finally, set
\[
 a_\varepsilon^{\rm c}=
 \sup\{a\ge0:\widetilde{\mathcal F}_\varepsilon(a)>\ell+a\}.
 \label{eq:coupled-crossing}
\]

The profile $\phi$ is the consequence of the energy budget alone.  The profile
$\psi$ retains the mean accepted density $m$.  The maximum $B$ combines the two
survivor constraints, and the outer expression charges the entropy of the
fresh-key reservoir.

\begin{lemma}[Energy-loss envelope]\label{lem:energy-loss}
For each $0<\lambda<1$, the function
$\phi_{\varepsilon,\lambda}$ in \eqref{eq:phi-profile} is decreasing,
convex, and continuous on $[0,\infty)$, and it is continuously differentiable
on $(0,\infty)$.  For every $0<x\le1$ and $\kappa\ge0$,
\[
 x2^{-\kappa/(1-\lambda)}
 \ge\phi_{\varepsilon,\lambda}
 \bigl(\mathcal E_\varepsilon(x)+\kappa\bigr).
 \tag{7.1}\label{eq:energy-envelope}
\]
Consequently, for the random branch variables above,
\[
 \E[X2^{-\kappa/(1-\lambda)}]
 \ge
 \phi_{\varepsilon,\lambda}
 \bigl(\E[\mathcal E_\varepsilon(X)+\kappa]\bigr).
 \tag{7.2}\label{eq:energy-jensen}
\]
\end{lemma}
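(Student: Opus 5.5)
The plan is to reduce \eqref{eq:energy-envelope} to a one-variable minimization by rescaling. Put $x=\varepsilon y$ with $0<y\le 1/\varepsilon$. A direct substitution in \eqref{eq:carter-energy} gives $\mathcal E_\varepsilon(\varepsilon y)=\log_2(1/y)+(y-1)/\ln2=e(y)$. So \eqref{eq:energy-envelope} reads
\[
 \varepsilon y\,2^{-\kappa/(1-\lambda)}\ge\phi_{\varepsilon,\lambda}\bigl(e(y)+\kappa\bigr).
\]
Fix the total energy $a=e(y)+\kappa$. The constraint $\kappa\ge0$ becomes $e(y)\le a$, and $\kappa=a-e(y)$. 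It therefore suffices to show that $\phi_{\varepsilon,\lambda}(a)$ is the minimum of
\[
 g_a(y)=\varepsilon y\,2^{-(a-e(y))/(1-\lambda)}
\]
over the feasible set $\{y>0: e(y)\le a\}$. Since $e$ is convex with minimum $0$ at $y=1$, this set is an interval $[y(a),y^+(a)]$ with $y(a)\le1\le y^+(a)$.

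To carry out the minimization, compute the logarithmic derivative. Using $y\,e'(y)\ln2=y-1$,
\[
 \frac{d\ln g_a}{d\ln y}=1+\frac{y-1}{1-\lambda}=\frac{y-\lambda}{1-\lambda}.
\]
Hence $g_a$ decreases on $(0,\lambda]$ and increases on $[\lambda,\infty)$; in particular, values $y>1$, which are allowed because $x\le1$ permits $y$ up to $1/\varepsilon$, are never better than $y=1$. This gives two cases.

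\emph{Case $a\ge e(\lambda)$.} Here $\lambda$ is feasible, and the minimum is $g_a(\lambda)=\varepsilon\lambda 2^{-(a-e(\lambda))/(1-\lambda)}$.

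\emph{Case $a<e(\lambda)$.} Because $e$ is decreasing on $(0,1]$, we have $y(a)>\lambda$. The whole feasible interval then lies where $g_a$ is increasing, so the minimum is at $y=y(a)$ with $\kappa=0$, and equals $\varepsilon y(a)$.

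Both values match the two branches of \eqref{eq:phi-profile}, which proves \eqref{eq:energy-envelope}, in fact with equality for the minimizer.

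For the regularity claims, handle each branch and then the junction.

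\emph{First branch.} Implicit differentiation gives $y'(a)=1/e'(y)=\ln2\cdot y/(y-1)<0$, so $\phi'(a)=\varepsilon\ln2\,y/(y-1)$. Since $\frac{d}{dy}\frac{y}{y-1}=-\frac{1}{(y-1)^2}$ and $y'(a)<0$, we get $\phi''>0$ on $(0,e(\lambda))$.

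\emph{Second branch.} It is a decreasing exponential in $a$, hence convex and decreasing.

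\emph{Junction $a=e(\lambda)$.} Both formulas give the value $\varepsilon\lambda$. Both one-sided slopes equal $-\varepsilon\lambda\ln2/(1-\lambda)$: the first from $\varepsilon\ln2\,\lambda/(\lambda-1)$, the second by differentiating the exponential.

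So $\phi_{\varepsilon,\lambda}$ is $C^1$ on $(0,\infty)$. A function that is convex on each piece and has a $C^1$ junction has nondecreasing derivative, hence is globally convex. At $a=0$ we have $y(0)=1$ and $\phi(0)=\varepsilon$, and $\phi$ is continuous there. The derivative tends to $-\infty$ as $a\downarrow0$, which is why differentiability is claimed only on $(0,\infty)$.

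Finally, \eqref{eq:energy-jensen} follows in two steps. First apply \eqref{eq:energy-envelope} pointwise on each branch with $x=X$ and the branch value $\kappa$, and take expectations. Then apply Jensen's inequality to the convex function $\phi_{\varepsilon,\lambda}$; integrability is automatic because $0\le\phi\le\varepsilon$. The case $X=0$ is excluded in \eqref{eq:energy-envelope}, but it cannot occur: $a_Z\ge n$ always holds, so $X>0$.

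I expect the main obstacle to be the minimization step. There one must check that the constraint $\kappa\ge0$ exactly produces the switch at $a=e(\lambda)$, and that the upper branch $y>1$ of the feasible set never beats the minimum. The sign computation of $d\ln g_a/d\ln y$ settles both points, so I would begin by verifying it.
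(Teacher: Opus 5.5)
Your proposal is correct and follows the paper's route. You rescale $x=\varepsilon y$, minimize at fixed total energy $a=e(y)+\kappa$ using that the logarithmic derivative has the sign of $y-\lambda$, check convexity and the $C^1$ junction at $a=e(\lambda)$ by implicit differentiation and slope matching, and finish with Jensen. Your treatment of the region $y>1$ — $g_a$ is unimodal with minimizer $\lambda<1$ on the interval $[y(a),y^+(a)]$ — is cleaner than the paper's remark about the upper root $y_+(a)$.
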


\begin{proof}
Write $x=\varepsilon y$ and fix
$a=\mathcal E_\varepsilon(x)+\kappa=e(y)+\kappa$.  Among feasible $y$,
minimizing $x2^{-\kappa/(1-\lambda)}$ is equivalent to minimizing
\[
 \ln y+\frac{(\ln2)e(y)}{1-\lambda}.
 \tag{7.3}\label{eq:energy-minimization}
\]
The derivative of \eqref{eq:energy-minimization} has the sign of
$y-\lambda$.  Although feasibility may also include an upper-branch root
$y_+(a)>1$, it cannot minimize the objective.  Since $\lambda<1$, the
objective is strictly increasing throughout the upper feasible component, so
its minimum there occurs at $y_+(a)$.  The roots $y_+(a)$ and $y(a)$ have the
same value of $e$, while $\ln y_+(a)>\ln y(a)$, so the lower root gives the
smaller objective.  The minimizer is therefore $y(a)$ with
$\kappa=0$ when
$y(a)\ge\lambda$, and $y=\lambda$ with $\kappa=a-e(\lambda)$ otherwise.
This proves \eqref{eq:energy-envelope} and the formula in
\eqref{eq:phi-profile}.

On the first branch, implicit differentiation gives
\[
 \phi_{\varepsilon,\lambda}'(a)
 =-\frac{\varepsilon y(a)\ln2}{1-y(a)},
\]
whose value increases with $a$.  At $a=e(\lambda)$ it equals
$-\varepsilon\lambda\ln2/(1-\lambda)$, which is also the derivative of the
second branch there.  The second branch is convex and decreasing.  Hence the
profile is decreasing and convex on $[0,\infty)$ and $C^1$ on $(0,\infty)$.
Applying Jensen's inequality to \eqref{eq:energy-envelope} proves
\eqref{eq:energy-jensen}.
\end{proof}

\begin{lemma}[Coupled survivor envelope]\label{lem:coupled-survivor}
Let
\[
 m=\frac{\E X}{\varepsilon},\qquad
 A=a\ln2,\qquad
 \E\bigl[\mathcal E_\varepsilon(X)+\kappa\bigr]\le a+o(1).
\]
Then $m\in[y(a),1]+o(1)$ and
\[
 \E\left[X2^{-\kappa/(1-\lambda)}\right]
 \ge \psi_\lambda(m,a)-o(1),
 \tag{7.4}\label{eq:psi-bound}
\]
where $\psi_\lambda$ is defined in \eqref{eq:coupled-survivor}.
\end{lemma}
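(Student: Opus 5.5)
The plan is to substitute $X=\varepsilon Y$, apply Jensen's inequality once to the exponential, and then eliminate $\E\kappa$ using the energy hypothesis. The exact logarithmic form of the Carter energy makes this elimination work, and a second Jensen step (concavity of $\ln$) removes the remaining $\E\ln Y$ term.

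\textbf{Range of $m$.} By \Cref{lem:fpr-average}, $m=\E Y\le 1+O(\delta)$. Since $\kappa\ge0$, the hypothesis gives $\E\mathcal E_\varepsilon(X)=\E e(Y)\le a+o(1)$. The function $e(y)=-\log_2 y+(y-1)/\ln 2$ is convex, so Jensen gives $e(m)\le a+o(1)$. On $(0,1]$ the function $e$ is decreasing, and $y(\cdot)$ is continuous, so $m\ge y(a)-o(1)$. In particular $m$ stays bounded away from $0$. Also $a_Z\ge n$ gives $Y>0$, and $\E(-\ln Y)<\infty$ follows from finiteness of $\E e(Y)$. Hence every quantity below is finite.

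\textbf{Main inequality.} Write
\[
 \E\bigl[X2^{-\kappa/(1-\lambda)}\bigr]
 =\varepsilon\,\E\exp\Bigl(\ln Y-\frac{\kappa\ln2}{1-\lambda}\Bigr)
 \ge\varepsilon\exp\Bigl(\E\ln Y-\frac{\E\kappa\,\ln2}{1-\lambda}\Bigr),
\]
which uses convexity of $\exp$. Next, $(\ln 2)\,e(Y)=-\ln Y+Y-1$, so the hypothesis reads
\[
 \E\kappa\,\ln2\le A+\E\ln Y-(m-1)+o(1).
\]
Substituting this bound, the exponent is at least
\[
 \frac{(1-\lambda)\E\ln Y-A-\E\ln Y+m-1}{1-\lambda}-o(1)
 =\frac{m-1-\lambda\,\E\ln Y-A}{1-\lambda}-o(1).
\]
Because $\lambda>0$ and $\E\ln Y\le\ln\E Y=\ln m$ by concavity, we have $-\lambda\E\ln Y\ge-\lambda\ln m$. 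The exponent is therefore at least $(m-1-\lambda\ln m-A)/(1-\lambda)-o(1)$.

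\textbf{Converting the error.} The additive $o(1)$ in the exponent becomes a multiplicative factor $e^{-o(1)}$. Since $m$ is bounded away from $0$, it lies in a compact subset of $(0,\infty)$, so $\psi_\lambda(m,a)$ is bounded there. The multiplicative factor is then an additive $o(1)$, which yields \eqref{eq:psi-bound}.

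The main obstacle is bookkeeping rather than conceptual. One must see that the $\E\ln Y$ terms combine with coefficient exactly $-\lambda$, with the correct sign, so that concavity of $\ln$ can be applied. One must also ensure that $m$ stays bounded away from $0$, so the exponential's error terms remain uniformly $o(1)$.
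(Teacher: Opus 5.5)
Your proposal is correct and follows essentially the same route as the paper. Both substitute $Y=X/\varepsilon$, use convexity of $-\ln y+y-1$ together with the accepted-size average to locate $m$, and combine the energy hypothesis with the exponent $\ln Y-\kappa\ln2/(1-\lambda)$ so that $\E\ln Y$ appears with coefficient $-\lambda$ and is bounded by $\ln m$; your Jensen step via convexity of $\exp$ is the paper's $\ln\E Z\ge\E\ln Z$. Your explicit conversion of the $o(1)$ in the exponent into an additive $o(1)$ in \eqref{eq:psi-bound}, using boundedness of $\psi_\lambda$ for $m$ in a compact subset of $(0,\infty)$, fills in a detail the paper leaves implicit.
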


\begin{proof}
Put $Y=X/\varepsilon$ and
\[
 T=(\ln2)\bigl(\mathcal E_\varepsilon(X)+\kappa\bigr)
 =-\ln Y+Y-1+(\ln2)\kappa.
\]
The convexity of $-\ln y+y-1$ gives
$-\ln m+m-1\le A+o(1)$, while the accepted-size estimate gives
$m\le1+o(1)$.  Thus $m\in[y(a),1]+o(1)$.
Moreover, branchwise,
\[
 \ln\frac{X2^{-\kappa/(1-\lambda)}}{\varepsilon}
 =\frac{Y-1-\lambda\ln Y-T}{1-\lambda}.
\]
Taking expectations and using $\E\ln Y\le\ln m$ and $\E T\le A+o(1)$ gives
\[
 \E\ln\frac{X2^{-\kappa/(1-\lambda)}}{\varepsilon}
 \ge
 \frac{m-1-\lambda\ln m-A}{1-\lambda}-o(1).
\]
Since $\ln$ is concave, $\ln \E Z\ge \E\ln Z$ for the positive random
variable $Z=X2^{-\kappa/(1-\lambda)}$.  Exponentiating proves
\eqref{eq:psi-bound}.
\end{proof}

\section{Proof of the fixed-error separation theorem}
\label{sec:general-epsilon}

We first verify that the coupled threshold is well defined.

\begin{lemma}[Coupled balance]\label{lem:coupled-balance}
The function $\widetilde{\mathcal F}_\varepsilon$ is nonincreasing,
tends to $+\infty$ as $a\downarrow0$, and is bounded as $a\to\infty$.
Consequently $a_\varepsilon^{\rm c}$ in \eqref{eq:coupled-crossing} is finite.
\end{lemma}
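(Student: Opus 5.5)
The plan is to prove the three properties of $\widetilde{\mathcal F}_\varepsilon$ directly from the definitions \eqref{eq:phi-profile}, \eqref{eq:coupled-survivor}, \eqref{eq:coupled-envelope}, and then read off finiteness of $a_\varepsilon^{\rm c}$ from \eqref{eq:coupled-crossing}. One identity is used repeatedly. Since $e(y(a))=a$ means $-\ln y(a)+y(a)-1=a\ln2$, we get
\[
 m-1-\lambda\ln m-a\ln2\big|_{m=y(a)}=(1-\lambda)\ln y(a),
 \qquad\text{so}\qquad \psi_\lambda(y(a),a)=\varepsilon y(a).
\]
Also, $y=y(a)$ with $\kappa=0$ is admissible in the minimization behind \Cref{lem:energy-loss}, so $\phi_{\varepsilon,\lambda}(a)\le\varepsilon y(a)$ for every $\lambda$.

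\emph{Monotonicity.} Fix $\lambda$ and let $a$ increase.
\begin{itemize}
\item The lower root $y(a)$ decreases, because $e$ is decreasing on $(0,1]$. Hence the range $[y(a),1]$ of the infimum only grows.
\item For each fixed $m$, both $\phi_{\varepsilon,\lambda}(a)$ (decreasing by \Cref{lem:energy-loss}) and $\psi_\lambda(m,a)$ (explicitly decreasing in $a$) decrease. So $B_{\varepsilon,\lambda}(a,m)$ decreases, the denominator $\varepsilon-B$ increases, and the bracketed term decreases. The $+\infty$ convention and the positive part $[\cdot]_+$ respect this order.
\end{itemize}
An infimum of smaller functions over a larger set is smaller, and taking $\sup_\lambda$ preserves monotonicity. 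So $\widetilde{\mathcal F}_\varepsilon$ is nonincreasing.

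\emph{Limit as $a\downarrow0$.} Fix a single $\lambda\in(0,1)$. Since $e(\lambda)>0$, for small $a$ the first branch applies and $\phi_{\varepsilon,\lambda}(a)=\varepsilon y(a)\to\varepsilon$, because $y(a)\to1$. This bound is uniform in $m$, since $\phi$ does not depend on $m$. Thus $\varepsilon-B_{\varepsilon,\lambda}(a,m)\le\varepsilon(1-y(a))\to0$ uniformly, or the term is $+\infty$ by convention. Meanwhile the numerator satisfies $1-\varepsilon m\ge1-\varepsilon>0$. So every term in the infimum is at least
\[
 \lambda(1-\varepsilon)\log_2\frac{1-\varepsilon}{\varepsilon(1-y(a))}\to\infty,
\]
and the same holds for $\widetilde{\mathcal F}_\varepsilon(a)$.

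\emph{Boundedness as $a\to\infty$.} This is the step requiring care, because the supremum over $\lambda$ near $1$ could a priori keep $\phi$ close to $\varepsilon$. The uniform bound $\phi_{\varepsilon,\lambda}(a)\le\varepsilon y(a)$ removes this issue. To bound the infimum from above, evaluate at $m=1$. There $\psi_\lambda(1,a)=\varepsilon e^{-a\ln2/(1-\lambda)}\le\varepsilon2^{-a}$. Hence, for all $a\ge a_0$ with $y(a_0)\le1/2$ and $2^{-a_0}\le1/2$, we have $B_{\varepsilon,\lambda}(a,1)\le\varepsilon/2$ uniformly in $\lambda$. The term at $m=1$ is therefore at most
\[
 \lambda(1-\varepsilon)\log_2\frac{2(1-\varepsilon)}{\varepsilon}\le\Bigl[\log_2\frac{2(1-\varepsilon)}{\varepsilon}\Bigr]_+,
\]
a constant independent of $\lambda$ and $a$. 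Taking $\sup_\lambda$ keeps this bound.

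\emph{Finiteness of $a_\varepsilon^{\rm c}$.} The set in \eqref{eq:coupled-crossing} is nonempty, since $\widetilde{\mathcal F}_\varepsilon(a)\to\infty>\ell+a$ for small $a>0$. It is also bounded above: for $a\ge a_0$ we have $\widetilde{\mathcal F}_\varepsilon(a)\le K$ for a constant $K$, while $\ell+a\to\infty$. Hence $0\le a_\varepsilon^{\rm c}<\infty$.
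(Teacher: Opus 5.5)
Your proof is correct and follows the paper's route. Monotonicity comes from the nesting of $[y(a),1]$ together with the monotonicity of $\phi_{\varepsilon,\lambda}$ and $\psi_\lambda$; the blow-up at $0$ comes from $\phi_{\varepsilon,\lambda}(a)=\varepsilon y(a)\to\varepsilon$; and boundedness comes from evaluating at $m=1$ with $\phi_{\varepsilon,\lambda}(a)\le\varepsilon y(a)$ and $\psi_\lambda(1,a)\le\varepsilon 2^{-a}$.

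Your threshold $a_0$, which forces $B_{\varepsilon,\lambda}(a,1)\le\varepsilon/2$, is slightly more careful than the paper. The paper's displayed constant $\sup_\lambda\lambda\beta[\log_2(\beta/\varepsilon)]_+$ is only the limiting value, since each term exceeds it because $\varepsilon-B<\varepsilon$. One small fix: keep the positive part on your intermediate bound, writing $\lambda(1-\varepsilon)[\log_2(2(1-\varepsilon)/\varepsilon)]_+$, so it remains valid when $\varepsilon>2/3$.
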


\begin{proof}
Monotonicity in $a$ follows from the monotonicity of both survivor lower
bounds and the nesting of the feasible intervals.  As $a\downarrow0$,
$y(a)\uparrow1$.  For the fixed choice $\lambda=1/2$, both
$\phi_{\varepsilon,1/2}(a)$ and $\psi_{1/2}(m,a)$ converge to
$\varepsilon$ uniformly for $m\in[y(a),1]$.  Since
$1-\varepsilon m\ge1-\varepsilon$, every finite term in the infimum
therefore diverges, and terms with $B_{\varepsilon,1/2}\ge\varepsilon$ are
already $+\infty$.  Thus $\widetilde{\mathcal F}_\varepsilon(a)\to+\infty$.
As $a\to\infty$, take $m=1$ in the infimum.  The scalar minimization giving
$\phi$ implies $\phi_{\varepsilon,\lambda}(a)\le\varepsilon y(a)$ for every
$\lambda$, while
$\psi_\lambda(1,a)=\varepsilon 2^{-a/(1-\lambda)}\le\varepsilon2^{-a}$.
Consequently the resulting rate is bounded by
\[
 \sup_{0<\lambda<1}\lambda\beta
 \left[\log_2\frac{\beta}{\varepsilon}\right]_+<\infty.
\]
Thus the threshold is finite.
\end{proof}

\begin{proof}[Proof of \Cref{thm:general-epsilon}]
Fix $0<a<a_\varepsilon^{\rm c}$.  By
\Cref{lem:coupled-balance},
$\widetilde{\mathcal F}_\varepsilon(a)>\ell+a$.  Choose a fixed $\lambda$ and
$\gamma>0$ such that, with
\[
 g_{\varepsilon,\lambda}(a,m)=
 \lambda(1-\varepsilon m)
 \left[\log_2\frac{1-\varepsilon m}
 {\varepsilon-B_{\varepsilon,\lambda}(a,m)}\right]_+,
 \]
\[
 \inf_{y(a)\le m\le1}g_{\varepsilon,\lambda}(a,m)
 \ge \ell+a+3\gamma. \tag{8.2}\label{eq:strict-margin}
\]
Suppose, toward a contradiction, that $H/n\le\ell+a$ along a sequence with
$u/n\to\infty$.

The Carter decomposition, the joint posterior chain rule, and
\Cref{lem:coupled-survivor} give
\[
\E[\mathcal E_\varepsilon(X)+\kappa]\le a+o(1),
\qquad
m=\E X/\varepsilon\in[y(a),1]+o(1),
\]
and
\[
\E\left[X2^{-\kappa/(1-\lambda)}\right]
\ge B_{\varepsilon,\lambda}(a,m)-o(1).
\]
Let $\widehat m$ be the projection of $m$ onto $[y(a),1]$.  Then
$|m-\widehat m|=o(1)$, and all nonsingular terms in the last display vary by
$o(1)$.  If
$B_{\varepsilon,\lambda}(a,\widehat m)>\varepsilon$, the average reservoir
bound is itself impossible for large $n$.  If equality holds, it gives
$\E c=o(1)$; since $\E(1-X)\ge\beta-o(1)$, the raw reservoir entropy
inequality forces $H/n\to\infty$, again a contradiction.  Otherwise the
strict margin in \eqref{eq:strict-margin} keeps the denominator positive and
uniformly below the numerator on the relevant compact subinterval.  The
coupled reservoir lemma then applies with
$c_0=\varepsilon-B_{\varepsilon,\lambda}(a,\widehat m)$ and yields
\[
\frac Hn\ge
g_{\varepsilon,\lambda}(a,\widehat m)-o(1)
\ge\ell+a+2\gamma-o(1),
\]
contradicting the assumed space bound.  Since every
$a<a_\varepsilon^{\rm c}$ is excluded, the theorem follows.
\end{proof}

\section{Discussion}\label{sec:discussion}

The proof separates three resources that can be conflated in dynamic lower
bounds: accepted-set volume, posterior nonuniformity inside that set, and
transition width after a common update word.  Its main point is that these
resources need not be controlled by separate thresholds.  The tangent gap
$\mathcal E_\varepsilon(X)$ and branch cost $\kappa$ form one energy, while
the successor reservoirs are charged by conditional entropy on average.

The joint conditioning on $(D,I)$ is essential.  An argument that first fixes
a witness source and then asks one insertion batch to avoid all alternative
witnesses incurs collision probability of order $n^2/u$.  Here each branch
has its own compatible posterior support.  The relative-entropy chain rule
charges all posterior loss before any tape or source is fixed.  The reservoir
entropy lemma then keeps the correlation between an insertion batch and its
successor state instead of discarding it through a union bound.

The threshold $a_\varepsilon^{\rm c}$ is determined by a coupled balance between
the Carter-energy loss and integrated replacement covering.  The local
profile $\phi_{\varepsilon,\lambda}$ records what the total energy alone
forces, while $\psi_\lambda(m,a)$ additionally records the mean accepted
mass.  The final envelope combines both constraints with the same reservoir
entropy cost.  The proof contains no cutoff or good-branch pruning.  Further
improvement would require using more of the conditional posterior than its
support union, or coupling several live successor response tables rather than
only one successor state.

The argument is a one-step converse for fixed worst-case persistent space.  It
applies to arbitrary legal histories and makes no assumption about update or
query time, stationarity, monotonicity, or the internal representation.  The
bound is therefore a genuine lower bound for the ordinary model; determining
the matching optimum, or extending the argument to a finite-horizon
distributional formulation, remains open.

\section*{Acknowledgments}

The authors acknowledge GPT-5.6 Sol, accessed through OpenAI Codex, and
DeepSeek V4 Pro, accessed through DeepSeek Harness, for assistance with
exploring and developing proof strategies, literature organization,
adversarial proof checking, \LaTeX{} preparation, and editorial revision.

\appendix
\section{Technical details}\label{sec:appendix-details}

This appendix makes explicit the treatment of the public tape and asymptotic
quantifiers.

\subsection{The reference conditional law in full detail}

Fix $z=(r,m)$ and let $A=A_z$.  Under the reference source law, a feasible
triple $(d,i,k_0)$ satisfies
\[
 |d|=|i|=q,\quad |k_0|=k,\quad
 d\cap k_0=d\cap i=k_0\cap i=\varnothing,
 \quad d\cup k_0\subseteq A.
\]
The source is necessarily $d\cup k_0$.  Its probability is $1/\binom an$;
the deletion set then has probability $1/\binom nq$ and the insertion set has
probability $1/\binom{u-n}q$.  This proves that the joint probability is
constant over all feasible triples.  For fixed $(d,i)$, feasibility of $k_0$
is exactly
\[
  k_0\in\binom{A\setminus(d\cup i)}k.
\]
This proves \eqref{eq:reference-conditional} without making any assertion
about the reference marginal distribution of $(D,I)$.

\subsection{Public tapes and conditional entropy}

The random tape can be modeled as a probability space
$(\mathcal R,\mathcal B,\Prb_R)$, possibly nonatomic.  For a fixed tape $r$,
the parent experiment has only finitely many source sets and states.  The map
$s\mapsto M_r(s)$ therefore defines the finite fibers $\cF_{r,m}$ used in
\Cref{sec:model}; no regular conditional probability at the singleton event
$R=r$ is required.

All expectations in \Cref{lem:carter,lem:joint-kl,lem:reservoir-entropy} may
first be evaluated for fixed $r$ over these finite fibers and then integrated
over $r$.  Conditional entropies involving the finite state $M'$ are therefore
ordinary expectations of finite-alphabet entropies.  The false-positive
guarantee is used only through the unconditional accepted-size averages in
\Cref{lem:fpr-average}; the proof never conditions that guarantee on a
selected tape.

\subsection{Floors and asymptotics}

For every fixed $0<\lambda<1$, the choices
$q=\lfloor\lambda n\rfloor$ and $k=n-q$ satisfy
\[
  q/n=\lambda+O(1/n),\qquad
  k/n=1-\lambda+O(1/n).
\]
They therefore change every normalized entropy estimate by only $o(1)$.

Every occurrence of $o(1)$ or $o(n)$ in the main proof is along an arbitrary
sequence $(n_j,u_j)$ satisfying $n_j\to\infty$ and $u_j/n_j\to\infty$, with
$\varepsilon$ fixed.  In the contradiction argument, $a$ and the selected
$\lambda$ are also fixed before the asymptotic limit is taken.

\bibliographystyle{plainnat}
\bibliography{references}

\end{document}